\newtheorem{lemma}{Lemma}
\newcommand{\expectation}{\ensuremath{\mathbb{E}}}
\newcommand{\Expt}{\expectation}
\begin{document}

\title{Submartingale Property of $E_0$ Under The Polarization Transformations}
\author{\IEEEauthorblockN{Mine Alsan, Emre Telatar}\\
\small\IEEEauthorblockA{Information Theory Laboratory\\
Ecole Polytechnique F\' ed\' erale de Lausanne\\
CH-1015 Lausanne, Switzerland\\
Email: mine.alsan@epfl.ch, emre.telatar@epfl.ch}
\normalsize}
\maketitle
\pagestyle{empty}
\thispagestyle{empty}

Given a binary input channel $W$, let $E_0(\rho,W)$ denote ``Gallager's $E_0$'' \cite[p.~138]{578869} evaluated for the uniform input distribution:
\begin{equation} \label{eq:E0}
E_0(\rho, W) = -\log \displaystyle\sum_{y\in \mathcal{Y}} \left[ \frac{1}{2} W(y\mid 0)^{\frac{1}{1+\rho}} + \frac{1}{2} W(y\mid 1)^{\frac{1}{1+\rho}} \right]^{1+\rho}.   
\end{equation} 

In this note we prove that the following relation
\begin{equation}\label{eq::EO_submartingale}
E_0(\rho, W^{-}) + E_0(\rho, W^{+}) \geq 2 E_0(\rho, W)
\end{equation} 
holds for any binary input discrete memoryless channel (B-DMC) $W$, and $\rho\geq 0$.
The channels $W^{-}$, and $W^{+}$ denote the synthesized channels after the application of the one step polarization transformations defined by Ar{\i}kan~\cite{1669570}. 
Their transition probabilities are given by:
\begin{align} 
 \label{align:trans1} &W^{-}(y_1 y_2 \mid u_1) = \displaystyle\sum_{u_2\in\{0, 1\}} \frac{1}{2} W(y_1 \mid u_1 \oplus u_2) W(y_2 \mid u_2) \\
 \label{align:trans2} &W^{+}(y_1 y_2 u_1\mid u_2) = \frac{1}{2} W(y_1 \mid u_1 \oplus u_2) W(y_2 \mid u_2). 
\end{align}

The special case of the relation above with $\rho=1$ was proved in~\cite{1669570}.  Another special case of the relation,
by first dividing by $\rho$ and taking the limit as $\rho$ tends to zero is also shown in~\cite{1669570} as a consequence
of the chain rule for mutual information.  We simply provide the extension of these results to arbitrary, non-negative
values of $\rho$.

\begin{proof}
By Lemmas \ref{lem:basic}, \ref{lem:minus}, and \ref{lem:plus} proved in the Appendix, we know that 
\begin{align*}
 E_0(\rho, W) &= -\log \Expt[g(\rho, Z)]  \\
 E_0(\rho, W^{-}) &= -\log \Expt[g(\rho, Z_1 Z_2)]  \\
 E_0(\rho, W^{+}) &= -\log \Expt[h(\rho, Z_1, Z_2)] 
\end{align*}
where $Z, Z_1, Z_2$ are independent, identically distributed random variables taking values in the $[0, 1]$ interval, and
\begin{align*}
 g(\rho, z) &\triangleq \left(\frac{1}{2}(1+z)^{\frac{1}{1+\rho}} + \frac{1}{2}(1-z)^{\frac{1}{1+\rho}}\right)^{1+\rho} \\
 h(\rho, z_1, z_2) &\triangleq \frac{1}{2}(1 + z_1z_2)g\Bigl(\rho, \frac{z_1 + z_2}{1 + z_1z_2}\Bigr) + \frac{1}{2}(1 - z_1z_2)g\Bigl(\rho, \frac{z_1 - z_2}{1 - z_1z_2}\Bigr).
\end{align*}
By these identities, showing \eqref{eq::EO_submartingale} is equivalent to showing
\begin{equation*}
\Expt[g(\rho, Z_1)]\Expt[g(\rho, Z_2)] \geq \Expt[g(\rho, Z_1 Z_2)]\Expt[h(\rho, Z_1, Z_2)].
\end{equation*}
\\
The proof is carried in two steps. We first claim that the following inequality is satisfied:
\begin{equation}\label{eq::inequality}
g(\rho, z_1) g(\rho, z_2) \geq g(\rho, z_1z_2)h(\rho, z_1, z_2)
\end{equation}
for any $z_1, z_2\in[0, 1]$, and $\rho\geq 0$. \\

Taking the expectation of both sides in \eqref{eq::inequality} and noting the independence of $Z_1$ and $Z_2$ gives
\begin{equation}
\Expt\left[g(\rho,Z_1)\right] \Expt\left[g(\rho,Z_2)\right]=
\Expt\left[g(\rho, Z_1) g(\rho, Z_2) \right] \geq  \Expt\left[g(\rho, Z_1 Z_2)h(\rho, Z_1, Z_2) \right].
\end{equation}
By Lemma \ref{lem::f_decreasing} in the Appendix, the function $g(\rho, z_1 z_2)$ is non-increasing in $z_1$, and $z_2$ separately for any $\rho\geq 0$. 
Similarly, by Lemma \ref{lem::h_decreasing} in the Appendix the function 
$h(\rho, z_1, z_2)$ is also non-increasing in both $z_1$, and $z_2$ separately for any $\rho\geq 0$. The monotonicity properties are useful as they imply (see, e.g., \cite[Ch.~9, p.~446-447]{Stochastic_Proc_Ross}) that the random variables $g(\rho,Z_1Z_2)$ and $h(\rho,Z_1,Z_2)$ are positively correlated.  As a result
\begin{equation}
 \Expt\left[g(\rho, Z_1)\right] \Expt\left[g(\rho, Z_2) \right] \geq  \Expt\left[g(\rho, Z_1 Z_2)h(\rho, Z_1, Z_2) \right] \geq \Expt\left[g(\rho, Z_1 Z_2)\right] \Expt\left[h(\rho, Z_1, Z_2) \right],
\end{equation}
concluding the proof of the relation given in \eqref{eq::EO_submartingale}.\\

Now, we prove the claimed inequality in \eqref{eq::inequality}. For that purpose, we first apply the transformations 
\begin{equation*}
s = \displaystyle\frac{1}{1+\rho}, \quad t = \operatorname{arctanh}{z_1}, \quad w = \operatorname{arctanh}{z_2}, \quad k = \operatorname{arctanh}(z_1z_2) 
\end{equation*}
where $s\in[0, 1]$, and $t, w, k\in[0, \infty)$. Using these, we obtain
\begin{align}
 \label{eq::g_1} g\Bigl(\frac{1-s}{s}, \tanh(t)\Bigr) &= \displaystyle\frac{\cosh(s t)^{\frac{1}{s}}}{\cosh(t)} \\
 \label{eq::g_2} g\Bigl(\frac{1-s}{s}, \tanh(w)\Bigr) &= \displaystyle\frac{\cosh(s w)^{\frac{1}{s}}}{\cosh(w)}
\end{align}
and
\begin{align}
 \label{eq::g_3} g\Bigl(\frac{1-s}{s}, \tanh(k)\Bigr) &= \displaystyle\frac{\cosh(s k)^{\frac{1}{s}}}{\cosh(k)} \\
 \label{eq::h} h\Bigl(\frac{1-s}{s}, \tanh(t), \tanh(w)\Bigr) &= \displaystyle\frac{\cosh(s(t+w))^{\frac{1}{s}} + \cosh(s(t-w))^{\frac{1}{s}}}{2\cosh(t)\cosh(w)}.
\end{align}
We further define the transformations
\begin{equation*}
 a = t + w, \quad b = t - w 
\end{equation*}
such that $t = \displaystyle\frac{a+b}{2}$, and $w = \displaystyle\frac{a-b}{2}$ where $a \geq |b|$.
Then, the variable $k$ is given by
\begin{equation}
 k = \displaystyle\frac{1}{2}\log\left(\displaystyle\frac{\cosh(a)}{\cosh(b)}\right).
\end{equation}
Therefore, the expression in \eqref{eq::g_3} becomes
\begin{equation} \label{eq::g_4}
 g\Bigl(\frac{1-s}{s}, \tanh(k)\Bigr) = \displaystyle\frac{\left(\displaystyle\frac{\cosh(a)^s + \cosh(b)^s}{2}\right)^{\frac{1}{s}}}{\displaystyle\frac{\cosh(a) + \cosh(b)}{2}}.
\end{equation}

After a few manipulations on the product of the equations \eqref{eq::g_1}, and \eqref{eq::g_2}, one can check that the LHS of \eqref{eq::inequality} is given by
\begin{equation}
\displaystyle\frac{\left(\frac{\cosh(s a) + \cosh(s b)}{2}\right)^{\frac{1}{s}}}{\cosh(t)\cosh(w)}. 
\end{equation}

Similarly, using equations \eqref{eq::h}, and \eqref{eq::g_4}, the RHS of \eqref{eq::inequality} is given by
\begin{equation}
\displaystyle\frac{\left(\frac{\cosh(a)^s + \cosh(b)^s}{2}\right)^{\frac{1}{s}}}{\frac{\cosh(a) + \cosh(b)}{2}}\times\displaystyle\frac{\cosh(s a)^{\frac{1}{s}} + \cosh(s b)^{\frac{1}{s}}}{2\cosh(t)\cosh(w)}.  
\end{equation}

Therefore, we obtain that the inequality \eqref{eq::inequality} is equivalent to 
\begin{equation}
\displaystyle\frac{\left(1 + \left(\frac{\cosh(b s)^\frac{1}{s}}{\cosh(a s)^\frac{1}{s}}\right)^s\right)^{\frac{1}{s}}}{1 +  \frac{\cosh(b s)^{\frac{1}{s}}}{\cosh(a s)^{\frac{1}{s}}}} 
\geq  \displaystyle\frac{\left(1 + \left(\frac{\cosh(b)}{\cosh(a)}\right)^s\right)^{\frac{1}{s}}}{1 + \frac{\cosh(b)}{\cosh(a)}}. 
\end{equation}
Let $u = \frac{\cosh(b s)^{\frac{1}{s}}}{\cosh(a s)^{\frac{1}{s}}}$, and $v = \frac{\cosh(b)}{\cosh(a)}$. Then, by Lemma \ref{lem::f_decreasing} in the Appendix, whenever $a \geq b \geq 0$, we have $u \geq v$ since 
\begin{equation*}
f_s(b) = \frac{\cosh(b s)^{\frac{1}{s}}}{\cosh(b)} \geq \frac{\cosh(a s)^{\frac{1}{s}}}{\cosh(a)} = f_{s}(a).  
\end{equation*}
Moreover, we have $u \geq v$ whenever $a \geq |b|$ by symmetry of the function $f_s(.)$ around zero. \\

As a result, we have reduced the inequality \eqref{eq::inequality} to the following form:
\begin{equation*}
 F_{s}(u) \geq F_{s}(v) \hspace{2mm} \hbox{where} \hspace{2mm} u \geq v.
\end{equation*}
But, we know this is true by Lemma \ref{lem::F_increasing} in the Appendix. This proves inequality \eqref{eq::inequality} holds as claimed.
\end{proof}

\section*{Appendix}
\begin{lemma}~\cite{notes1}\label{lem:basic}
Given a channel $W$ and $\rho\geq0$, there exist a random variable $Z$ taking values in the $[0, 1]$ interval such that
\begin{equation}\label{eq:Eo}
 E_0( \rho, W) = -\log{\Expt\left[ g(\rho, Z)\right] }   
\end{equation}
where
\begin{equation}\label{eq:g}
 g(\rho, z) =  \left(  \frac{1}{2}\left( 1 + z\right)^{\frac{1}{1+\rho}} + \frac{1}{2}\left(1 - z\right)^{\frac{1}{1+\rho}} \right)^{1+\rho}.
\end{equation}
\end{lemma}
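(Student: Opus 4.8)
The plan is to produce $Z$ explicitly as a deterministic function of the channel output when the input is uniform. For every $y \in \mathcal{Y}$ with $W(y\mid 0) + W(y\mid 1) > 0$, I would set $q(y) = \tfrac12\bigl(W(y\mid 0) + W(y\mid 1)\bigr)$ (the output distribution under a uniform input) and
\[
 z(y) = \frac{\lvert W(y\mid 0) - W(y\mid 1)\rvert}{W(y\mid 0) + W(y\mid 1)} \in [0,1].
\]
The useful consequence of these definitions is the factorization $\{W(y\mid 0),\, W(y\mid 1)\} = \{\, q(y)(1+z(y)),\ q(y)(1-z(y)) \,\}$, where which of the two values carries the $+$ sign depends on the sign of $W(y\mid 0) - W(y\mid 1)$ --- but, as the next step shows, this ambiguity is harmless.

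Next I would substitute this into a single summand of \eqref{eq:E0}. Because $\tfrac12 W(y\mid 0)^{1/(1+\rho)} + \tfrac12 W(y\mid 1)^{1/(1+\rho)}$ is symmetric in its two arguments, it equals $q(y)^{1/(1+\rho)}\bigl(\tfrac12(1+z(y))^{1/(1+\rho)} + \tfrac12(1-z(y))^{1/(1+\rho)}\bigr)$ irrespective of the sign choice above. Raising to the power $1+\rho$ turns the leading factor into $q(y)$ and the parenthesis into $g(\rho, z(y))$ by the definition \eqref{eq:g}, so the summand is exactly $q(y)\, g(\rho, z(y))$. Outputs with $W(y\mid 0) + W(y\mid 1) = 0$ contribute nothing and can simply be dropped.

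Summing over $y$ then gives $\sum_y \bigl[\cdots\bigr]^{1+\rho} = \sum_y q(y)\, g(\rho, z(y)) = \Expt[g(\rho, Z)]$ once we let $Y$ have distribution $q$ and put $Z = z(Y)$; by construction $Z$ takes values in $[0,1]$, and inserting this identity into \eqref{eq:E0} yields \eqref{eq:Eo}. The same $Z$ --- a ``Bhattacharyya-type'' parameter of the channel seen through its output --- is the one reused in Lemmas \ref{lem:minus} and \ref{lem:plus}.

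I do not expect a real obstacle in this lemma: it is a change of variables from the channel's transition probabilities to the absolute-likelihood-difference parameter, and the only points requiring a moment's care are that $g(\rho, \cdot)$ is even in $z$ (so replacing $W(y\mid 0) - W(y\mid 1)$ by its absolute value in $z(y)$ loses nothing) and the handling of degenerate outputs, both of which are immediate. The genuine work of the note lies instead in the monotonicity Lemmas \ref{lem::f_decreasing}, \ref{lem::h_decreasing}, \ref{lem::F_increasing} and in the pointwise inequality \eqref{eq::inequality}.
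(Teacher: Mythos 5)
Your proposal is correct and follows essentially the same route as the paper's proof: the same reparametrization $W(y\mid b) = q(y)(1\pm\Delta(y))$ with $Z=\lvert\Delta(Y)\rvert$ for $Y\sim q$, with the paper leaving the ``simple manipulations'' implicit that you spell out (the symmetry in the two channel outputs and the handling of degenerate $y$).
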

\begin{proof}%[Proof of Lemma \ref{lem:basic}]
Recall
$\displaystyle
   E_0( \rho, W) = -\log\displaystyle\sum_{y} \bigg[ \hspace{2mm} \frac{1}{2} W(y \mid 0)^{\frac{1}{1+\rho}} + \frac{1}{2} W(y \mid 1)^{\frac{1}{1+\rho}} \hspace{2mm} \bigg]^{1+\rho} \nonumber 
$.
Define
\begin{equation}\label{eq:dist}
 q(y)=\frac{W(y\mid0) + W(y\mid1)}{2} \quad\text{and}\quad \Delta(y) = \frac{W(y\mid0)-W(y\mid1)}{W(y\mid0)+W(y\mid1)} \\
\end{equation}
so that
 $W(y\mid0)= q(y)[1+\Delta(y)]$ and $W(y\mid1) = q(y)[1-\Delta(y)]$.
Then, one can define the random variable $Z =  \lvert\Delta(Y)\rvert\in [0, 1]$ where $Y$ has the probability distribution $q(y)$, and obtain 
\eqref{eq:Eo} by simple manipulations.
\end{proof}

\begin{lemma}\label{lem:minus}
Given a channel $W$ and $\rho\geq0$, let $Z_1$ and $Z_2$ be independent copies of the random variable $Z$ defined in Lemma~\ref{lem:basic}.  Then,
\begin{equation}\label{eq:Eominus}
 E_0( \rho, W^{-}) = -\log{\Expt\left[ g(\rho, Z_1 Z_2)\right] }   
\end{equation}
where $g(\rho, z)$ is given by \eqref{eq:g}. 
\end{lemma}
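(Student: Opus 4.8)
The plan is to apply Lemma~\ref{lem:basic} directly to the synthesized channel $W^{-}$, so that it only remains to identify the law of the random variable attached to $W^{-}$. Recall from the proof of Lemma~\ref{lem:basic} that for a binary-input channel $V$ with output alphabet $\mathcal{Z}$ one sets $q_V(z) = \frac{1}{2}[V(z\mid0)+V(z\mid1)]$ and $\Delta_V(z) = [V(z\mid0)-V(z\mid1)]/[V(z\mid0)+V(z\mid1)]$, and obtains $E_0(\rho,V) = -\log\Expt[g(\rho,|\Delta_V(Z)|)]$ with $Z\sim q_V$. Applied to $V=W^{-}$, whose output alphabet is $\mathcal{Y}\times\mathcal{Y}$, this reduces \eqref{eq:Eominus} to showing that $|\Delta_{W^{-}}(Y_1,Y_2)|$ has the same distribution as the product $Z_1Z_2$.

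First I would substitute the parametrization $W(y\mid0)=q(y)[1+\Delta(y)]$ and $W(y\mid1)=q(y)[1-\Delta(y)]$ from \eqref{eq:dist} into the definition \eqref{align:trans1}. Writing out
\begin{align*}
 W^{-}(y_1y_2\mid0) &= \frac{1}{2}\bigl[W(y_1\mid0)W(y_2\mid0) + W(y_1\mid1)W(y_2\mid1)\bigr], \\
 W^{-}(y_1y_2\mid1) &= \frac{1}{2}\bigl[W(y_1\mid0)W(y_2\mid1) + W(y_1\mid1)W(y_2\mid0)\bigr],
\end{align*}
the terms linear in a single $\Delta(y_i)$ cancel and one is left with
\begin{align*}
 W^{-}(y_1y_2\mid0) &= q(y_1)q(y_2)\,\bigl[1 + \Delta(y_1)\Delta(y_2)\bigr], \\
 W^{-}(y_1y_2\mid1) &= q(y_1)q(y_2)\,\bigl[1 - \Delta(y_1)\Delta(y_2)\bigr].
\end{align*}
Hence $q_{W^{-}}(y_1y_2) = q(y_1)q(y_2)$ and $\Delta_{W^{-}}(y_1y_2) = \Delta(y_1)\Delta(y_2)$.

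Because $q_{W^{-}}$ is a product distribution, under it the coordinates $Y_1$ and $Y_2$ are independent, each with law $q$; consequently $|\Delta_{W^{-}}(Y_1,Y_2)| = |\Delta(Y_1)|\,|\Delta(Y_2)|$ has exactly the distribution of $Z_1Z_2$ for independent copies $Z_1,Z_2$ of the variable $Z$ of Lemma~\ref{lem:basic}. Feeding this into Lemma~\ref{lem:basic} for $W^{-}$ gives $E_0(\rho,W^{-}) = -\log\Expt[g(\rho,Z_1Z_2)]$, as claimed.

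There is no genuine obstacle: the only points needing care are the algebraic cancellation of the cross terms (immediate once the parametrization is plugged in) and the remark that the product form of $q_{W^{-}}$ is precisely what delivers the independence of $Z_1$ and $Z_2$ — the same independence that is invoked later in the main proof to split $\Expt[g(\rho,Z_1)g(\rho,Z_2)]$ into $\Expt[g(\rho,Z_1)]\Expt[g(\rho,Z_2)]$.
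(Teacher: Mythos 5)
Your proof is correct and follows essentially the same route as the paper's: both rest on the observation that in the $(q,\Delta)$ parametrization of \eqref{eq:dist} the cross terms cancel, yielding $W^{-}(y_1y_2\mid u_1)=q(y_1)q(y_2)\bigl[1\pm\Delta(y_1)\Delta(y_2)\bigr]$, so that the relevant random variable for $W^{-}$ is $|\Delta(Y_1)\Delta(Y_2)|=Z_1Z_2$ with $Y_1,Y_2$ independent under the product law $q\times q$. The only difference is organizational --- you identify $q_{W^{-}}$ and $\Delta_{W^{-}}$ and then invoke Lemma~\ref{lem:basic} as a black box, whereas the paper carries out the same cancellation inside the $E_0$ sum; the mathematical content is identical.
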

\begin{proof}%[Proof of Lemma \ref{lem:minus}]
From the definition of channel $W^{-}$ in \eqref{align:trans1}, we can write
\begin{align}
 E_0( \rho, W^{-}) = -\log\displaystyle\sum_{y_1, y_2} &\bigg[ \hspace{2mm} \frac{1}{2} W^{-}(y_1, y_2 \mid 0)^{\frac{1}{1+\rho}} + \frac{1}{2} W^{-}(y_1, y_2 \mid 1)^{\frac{1}{1+\rho}} \hspace{2mm} \bigg]^{1+\rho} \nonumber \\
= -\log\displaystyle\sum_{y_1, y_2} &\bigg[ \hspace{2mm} \frac{1}{2} \left( \frac{1}{2} W(y_1 \mid 0) W(y_2 \mid 0) +  \frac{1}{2} W(y_1 \mid 1) W(y_2 \mid 1) \right)^{\frac{1}{1+\rho}} \nonumber \\
&+ \frac{1}{2} \left( \frac{1}{2} W(y_1 \mid 1) W(y_2 \mid 0) +  \frac{1}{2} W(y_1 \mid 0) W(y_2 \mid 1) \right)^{\frac{1}{1+\rho}} \bigg]^{1+\rho} \nonumber \\
= -\log \displaystyle\sum_{y_1 y_2} &\left[ \hspace{2mm} \frac{1}{2} \left(\frac{1}{2}\right)^{\frac{1}{1+\rho}} q\displaystyle \left(y_1\right)^{\frac{1}{1+\rho}} q\left(y_2\right)^{\frac{1}{1+\rho}} \right. \nonumber \\
& \hspace{4mm} \bigl( \left( 1 + \Delta\left(y_1\right)\right)  \left( 1 + \Delta\left(y_2\right)\right)  + \left( 1 - \Delta\left(y_1\right)\right)  \left( 1 - \Delta\left(y_2\right)\right) \bigr) ^{\frac{1}{1+\rho}} \nonumber \\
& + \displaystyle  \bigl( \left( 1 - \Delta\left(y_1\right)\right)  \left( 1 + \Delta\left(y_2\right)\right) + \left( 1 + \Delta\left(y_1\right)\right)  \left( 1 - \Delta\left(y_2\right)\right)\bigr)^{\frac{1}{1+\rho}} \hspace{2mm} \bigg]^{1+\rho} \nonumber  \\
= -\log \displaystyle\sum_{y_1 y_2} &\hspace{2mm} q(y_1) \hspace{2mm} q(y_2) \hspace{2mm} \left[\frac{1}{2} \bigl(1 + \Delta(y_1) \Delta(y_2)\bigr)^{\frac{1}{1+\rho}} + \frac{1}{2} \bigl(1-\Delta(y_1)\Delta(y_2)\bigr)^{\frac{1}{1+\rho}}\right]^{1+\rho} \nonumber
\end{align}
where we used \eqref{eq:dist}. We can now define $Z_1 = \lvert\Delta(Y_1)\rvert$ and $Z_2 = \lvert\Delta(Y_2)\rvert$ 
where $Y_1$ and $Y_2$ are independent random variables with distribution $q$.
From this construction, the lemma follows. 
\end{proof}

\begin{lemma}\label{lem:plus}
Given a channel $W$ and $\rho\geq0$, let $Z_1$ and $Z_2$ be as in Lemma~\ref{lem:minus}.  Then,
\begin{equation}\label{eq:Eoplus}
E_0( \rho, W^{+}) = -\log{\Expt{\left[ \hspace{2mm} \frac{1}{2} \bigl(1 + Z_1 Z_2\bigr)\hspace{2mm} g\Bigl( \rho, \frac{Z_1 + Z_2}{1 + Z_1 Z_2}\Bigr) +\frac{1}{2} \bigl(1 - Z_1 Z_2\bigr)\hspace{2mm} g\Bigl( \rho, \frac{Z_1 - Z_2}{1 - Z_1 Z_2}\Bigr) \hspace{2mm} \right]}} 
\end{equation}
where $g(\rho, z)$ is given by \eqref{eq:g}. 
\end{lemma}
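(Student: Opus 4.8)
The plan is to repeat, for $W^{+}$, the same bookkeeping that established Lemmas~\ref{lem:basic} and~\ref{lem:minus}. Applying the definition~\eqref{eq:E0} to $W^{+}$ and using~\eqref{align:trans2}, one has
\[
E_0(\rho, W^{+}) = -\log\sum_{y_1,y_2,u_1}\Bigl[\tfrac12 W^{+}(y_1y_2u_1\mid 0)^{\frac1{1+\rho}} + \tfrac12 W^{+}(y_1y_2u_1\mid 1)^{\frac1{1+\rho}}\Bigr]^{1+\rho},
\]
where $W^{+}(y_1y_2u_1\mid 0)=\tfrac12 W(y_1\mid u_1)W(y_2\mid 0)$ and $W^{+}(y_1y_2u_1\mid 1)=\tfrac12 W(y_1\mid u_1\oplus 1)W(y_2\mid 1)$. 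Pulling the two inner factors of $\tfrac12$ out of the square bracket, the summand reads $\tfrac12\bigl(\tfrac12\bigr)^{1+\rho}$ times $\bigl[(W(y_1\mid u_1)W(y_2\mid 0))^{\frac1{1+\rho}} + (W(y_1\mid u_1\oplus1)W(y_2\mid 1))^{\frac1{1+\rho}}\bigr]^{1+\rho}$.

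Next I would insert the parametrization $W(y\mid 0)=q(y)[1+\Delta(y)]$, $W(y\mid 1)=q(y)[1-\Delta(y)]$ from~\eqref{eq:dist}, factor the common $(q(y_1)q(y_2))^{\frac1{1+\rho}}$ out of the bracket, and apply, with $\Delta_i\triangleq\Delta(y_i)$, the elementary identities
\[
(1\pm\Delta_1)(1\pm\Delta_2)=(1+\Delta_1\Delta_2)\Bigl(1\pm\tfrac{\Delta_1+\Delta_2}{1+\Delta_1\Delta_2}\Bigr),\qquad
(1\pm\Delta_1)(1\mp\Delta_2)=(1-\Delta_1\Delta_2)\Bigl(1\pm\tfrac{\Delta_1-\Delta_2}{1-\Delta_1\Delta_2}\Bigr)
\]
for the $u_1=0$ and $u_1=1$ terms, respectively. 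Since $[(1+z)^{\frac1{1+\rho}}+(1-z)^{\frac1{1+\rho}}]^{1+\rho}=2^{1+\rho}g(\rho,z)$ by~\eqref{eq:g}, the $u_1=0$ bracket becomes $2^{1+\rho}q(y_1)q(y_2)(1+\Delta_1\Delta_2)\,g\!\bigl(\rho,\tfrac{\Delta_1+\Delta_2}{1+\Delta_1\Delta_2}\bigr)$ and the $u_1=1$ bracket becomes $2^{1+\rho}q(y_1)q(y_2)(1-\Delta_1\Delta_2)\,g\!\bigl(\rho,\tfrac{\Delta_1-\Delta_2}{1-\Delta_1\Delta_2}\bigr)$. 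Summing over $u_1$ and absorbing the constants, $\tfrac12\bigl(\tfrac12\bigr)^{1+\rho}2^{1+\rho}=\tfrac12$, into the two halves, one arrives at
\[
E_0(\rho, W^{+}) = -\log\sum_{y_1,y_2}q(y_1)q(y_2)\Bigl[\tfrac12(1+\Delta_1\Delta_2)g\!\bigl(\rho,\tfrac{\Delta_1+\Delta_2}{1+\Delta_1\Delta_2}\bigr)+\tfrac12(1-\Delta_1\Delta_2)g\!\bigl(\rho,\tfrac{\Delta_1-\Delta_2}{1-\Delta_1\Delta_2}\bigr)\Bigr].
\]

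It then remains to pass from $\Delta_i$ to $|\Delta_i|$. I would use that $g(\rho,\cdot)$ is even in its second argument (immediate from~\eqref{eq:g}) and examine the sign patterns of $(\Delta_1,\Delta_2)$: when the two signs agree, $\Delta_1\Delta_2=|\Delta_1||\Delta_2|$ and the arguments of the two $g$'s are $\pm\tfrac{|\Delta_1|+|\Delta_2|}{1+|\Delta_1||\Delta_2|}$ and $\pm\tfrac{|\Delta_1|-|\Delta_2|}{1-|\Delta_1||\Delta_2|}$; when the signs disagree, $\Delta_1\Delta_2=-|\Delta_1||\Delta_2|$ and the two summands merely trade places. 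In every case the bracket equals the integrand of~\eqref{eq:Eoplus} evaluated at $|\Delta(y_1)|,|\Delta(y_2)|$, so taking $Y_1,Y_2$ independent with law $q$ and setting $Z_1=|\Delta(Y_1)|$, $Z_2=|\Delta(Y_2)|$ yields~\eqref{eq:Eoplus}.

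All of this is routine algebra of exactly the flavour of the proof of Lemma~\ref{lem:minus}; the only step needing a bit of care is the last one — the reduction to $|\Delta_i|$, which hinges on the evenness of $g(\rho,\cdot)$ and on tracking how the two summands are permuted by the sign of $\Delta_1\Delta_2$. I expect that (minor) sign bookkeeping, rather than the expansion itself, to be the main obstacle.
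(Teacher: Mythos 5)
Your proposal is correct and follows essentially the same route as the paper's own proof: expand $E_0(\rho,W^{+})$ via \eqref{align:trans2}, sum over $u_1$, substitute the $q(y)[1\pm\Delta(y)]$ parametrization of \eqref{eq:dist}, apply the product identities to produce the two $g$-terms weighted by $1\pm\Delta_1\Delta_2$, and finally pass to $Z_i=|\Delta(Y_i)|$ by the evenness of $g(\rho,\cdot)$ together with the observation that an opposite-sign pair merely swaps the two summands. The paper performs exactly this sign-case analysis, so no gap remains.
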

\begin{proof}%[Proof of Lemma \ref{lem:plus}]
From the definition of channel $W^{+}$ in \eqref{align:trans2}, we can write
\begin{align}
E_0( \rho, W^{+}) \nonumber \\
= -\log\displaystyle\sum_{y_1, y_2, u} &\bigg[ \hspace{2mm} \frac{1}{2} W^{+}(y_1, y_2, u \mid 0)^{\frac{1}{1+\rho}} + \frac{1}{2} W^{+}(y_1, y_2, u \mid 1)^{\frac{1}{1+\rho}} \hspace{2mm} \bigg]^{1+\rho} \nonumber \\
= -\log\displaystyle\sum_{y_1, y_2, u} &\bigg[ \hspace{2mm} \frac{1}{2} \left( \frac{1}{2} W(y_1 \mid u) W(y_2 \mid 0) \right)^{\frac{1}{1+\rho}} + \frac{1}{2} \left( \frac{1}{2} W(y_1 \mid u\oplus 1) W(y_2 \mid 1)\right)^{\frac{1}{1+\rho}} \bigg]^{1+\rho} \nonumber \\
= -\log\displaystyle\sum_{y_1, y_2} &\left(  \hspace{2mm} \bigg[ \hspace{2mm} \frac{1}{2} \left( \frac{1}{2} W(y_1 \mid 0) W(y_2 \mid 0)\right)^{\frac{1}{1+\rho}}  +  \frac{1}{2} \left( \frac{1}{2} W(y_1 \mid 1) W(y_2 \mid 1) \right)^{\frac{1}{1+\rho}}  \hspace{2mm} \bigg]^{1+\rho}\right.  \nonumber \\
&+ \left. \bigg[ \hspace{2mm} \frac{1}{2} \left( \frac{1}{2} W(y_1 \mid 1) W(y_2 \mid 0)\right)^{\frac{1}{1+\rho}}  +  \frac{1}{2} \left( \frac{1}{2} W(y_1 \mid 0) W(y_2 \mid 1) \right)^{\frac{1}{1+\rho}}  \hspace{2mm} \bigg]^{1+\rho} \right)  \nonumber 
\end{align}
Using \eqref{eq:dist}, we have
\begin{align}
&E_0(\rho, W^{+}) \nonumber \\
= &-\log \displaystyle\sum_{y_1 y_2} \hspace{2mm} \frac{1}{2} \hspace{2mm} q(y_1) \hspace{2mm} q(y_2) \nonumber \\
&\hspace{16mm} \left( \hspace{2mm} \bigg[ \hspace{2mm} \bigl(\left( 1 + \Delta(y_1)\right) \left( 1 + \Delta(y_2)\right)  \bigr)^{\frac{1}{1+\rho}} +  \bigl(\left( 1 - \Delta(y_1)\right) \left( 1 - \Delta(y_2)\right)  \bigr)^{\frac{1}{1+\rho}} \hspace{2mm} \bigg]^{1+\rho} \right.  \nonumber \\
& \hspace{16mm} + \left. \bigg[ \hspace{2mm} \bigl(\left( 1 - \Delta(y_1)\right) \left( 1 + \Delta(y_2)\right)  \bigr)^{\frac{1}{1+\rho}} +  \bigl(\left( 1 - \Delta(y_1)\right) \left( 1 + \Delta(y_2)\right)  \bigr)^{\frac{1}{1+\rho}} \hspace{2mm} \bigg]^{1+\rho} \right) \nonumber \\ 
= &-\log \left( \hspace{2mm} \displaystyle\sum_{y_1 y_2} \hspace{2mm} \frac{1}{2} \hspace{2mm} q(y_1) \hspace{2mm} q(y_2) \hspace{2mm} \bigl( 1 + \Delta(y_1) \Delta(y_2)\bigr) \right.\nonumber\\
&\hspace{22mm} \bigg[ \hspace{2mm} \frac{1}{2} \left(1 + \frac{\Delta(y_1) + \Delta(y_2)}{1 + \Delta(y_1)\Delta(y_2)}\right)^{\frac{1}{1+\rho}} + \frac{1}{2} \left(1 - \frac{\Delta(y_1) + \Delta(y_2)}{1 + \Delta(y_1)\Delta(y_2)}\right)^{\frac{1}{1+\rho}}\bigg]^{1+\rho}   \nonumber \\
&\hspace{10mm} + \hspace{2mm} \displaystyle\sum_{y_1 y_2} \hspace{2mm} \frac{1}{2} \hspace{2mm} q(y_1) \hspace{2mm} q(y_2) \hspace{2mm} \bigl( 1 - \Delta(y_1) \Delta(y_2)\bigr) \nonumber \\ 
&\hspace{22mm} \left. \bigg[ \hspace{2mm} \frac{1}{2} \left(1 + \frac{\Delta(y_1) - \Delta(y_2)}{1 - \Delta(y_1)\Delta(y_2)}\right)^{\frac{1}{1+\rho}} + \frac{1}{2} \left(1 - \frac{\Delta(y_1) - \Delta(y_2)}{1 - \Delta(y_1)\Delta(y_2)}\right)^{\frac{1}{1+\rho}}\bigg]^{1+\rho}  \right)   \nonumber \\
&= -\log \left( \hspace{2mm} \displaystyle\sum_{y_1 y_2} \hspace{2mm} \frac{1}{2} \hspace{2mm} q(y_1) \hspace{2mm} q(y_2) \hspace{2mm} \bigl( 1 + \Delta(y_1) \Delta(y_2)\bigr) \hspace{2mm} g\Bigl(\rho, \frac{\Delta(y_1) + \Delta(y_2)}{1 + \Delta(y_1) \Delta(y_2)}\Bigr) \right.\nonumber \\ 
&\hspace{14mm} + \hspace{2mm} \displaystyle\sum_{y_1 y_2} \hspace{2mm} \left. \frac{1}{2} \hspace{2mm} q(y_1) \hspace{2mm} q(y_2) \hspace{2mm} \bigl( 1 - \Delta(y_1) \Delta(y_2)\bigr) \hspace{2mm}  g\Bigl(\rho, \frac{\Delta(y_1) - \Delta(y_2)}{1 - \Delta(y_1) \Delta(y_2)}\Bigr) \right)     \nonumber 
\end{align}
where $g(\rho, z)$ is defined in \eqref{eq:g}.\\\\
Similar to the $E_0(\rho, W^{-})$ case, we define $Z_1 = \lvert\Delta(Y_1)\rvert$ and 
$Z_2 = \lvert\Delta(Y_2)\rvert$ where $Y_1$ and $Y_2$ are independent random variables with distribution $q$.
 However, we should check whether this construction is equivalent to the above equation. We note that $\Delta(y) \in [-1, 1]$. 
When $\Delta(y_1)$ and $\Delta(y_2)$ are of the same sign, we can easily see (noting that $g(\rho,z)$ is symmetric about $z=0$) that
\begin{align}
\bigl( 1 + \Delta(y_1) \Delta(y_2)\bigr)\hspace{2mm} g\Bigl( \rho, \frac{\Delta(y_1) + \Delta(y_2)}{1 + \Delta(y_1) \Delta(y_2)}\Bigr)  &=  \bigl( 1 + Z_1 Z_2\bigr)\hspace{2mm}  g\Bigl(\rho, \frac{Z_1 + Z_2}{1 + Z_1 Z_2}\Bigr)  \nonumber \\ 
\bigl( 1 - \Delta(y_1) \Delta(y_2)\bigr)\hspace{2mm} g\Bigl( \rho, \frac{\Delta(y_1) - \Delta(y_2)}{1 - \Delta(y_1) \Delta(y_2)}\Bigr)  &=  \bigl( 1 - Z_1 Z_2\bigr)\hspace{2mm}  g\Bigl(\rho, \frac{Z_1 - Z_2}{1 - Z_1 Z_2}\Bigr) \nonumber 
\end{align}
When $\Delta(y_1)$ and $\Delta(y_2)$ are of the opposite sign, we note that
\begin{align}
\bigl( 1 + \Delta(y_1) \Delta(y_2)\bigr)\hspace{2mm}  g\Bigl(\rho, \frac{\Delta(y_1) + \Delta(y_2)}{1 + \Delta(y_1) \Delta(y_2)}\Bigr) =  \bigl( 1 - Z_1 Z_2\bigr)\hspace{2mm}  g\Bigl(\rho, \frac{Z_1 - Z_2}{1 - Z_1 Z_2}\Bigr) \nonumber \\ 
\bigl( 1 - \Delta(y_1) \Delta(y_2)\bigr)\hspace{2mm}  g\Bigl(\rho, \frac{\Delta(y_1) - \Delta(y_2)}{1 - \Delta(y_1) \Delta(y_2)}\Bigr) =  \bigl( 1 + Z_1 Z_2\bigr)\hspace{2mm}  g\Bigl(\rho, \frac{Z_1 + Z_2}{1 + Z_1 Z_2}\Bigr) \nonumber 
\end{align}
Since we are interested in the sum of the above two parts, we can see that the construction we propose is still equivalent. 
This concludes the proof.
\end{proof}

\begin{lemma}\label{lem::F_increasing}
For $s\in[0, 1]$, define the function $F_{s}:[0, 1]\to [1, 2^{\frac{1-s}{s}}]$ as
\begin{equation}\label{eq::F_def}
 F_{s}(x) = \displaystyle\frac{\left(1 + x^s\right)^{\frac{1}{s}}}{1+x}.
\end{equation}
Then, $ F_{s}$ is a non-decreasing function.
\end{lemma}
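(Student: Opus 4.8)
The plan is to prove the statement by a direct derivative computation, handling the degenerate endpoint $s=1$ separately. For $s=1$ one has $F_1(x) = (1+x)/(1+x) = 1$, which is constant, hence (vacuously) non-decreasing; the value $s=0$ corresponds to $\rho=\infty$ and is irrelevant, so from now on assume $s\in(0,1)$. For such $s$ the function $F_s$ is smooth on $(0,1]$ and continuous on all of $[0,1]$ with $F_s(0)=1$, so it suffices to show $F_s'(x)\ge 0$ for every $x\in(0,1]$; continuity then extends the monotonicity to the closed interval.

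The first step is to pass to logarithms to make the differentiation painless: write
\[
\log F_s(x) = \tfrac{1}{s}\log(1+x^s) - \log(1+x),
\]
and differentiate in $x$ to get
\[
\frac{F_s'(x)}{F_s(x)} = \frac{x^{s-1}}{1+x^s} - \frac{1}{1+x}.
\]
Since $F_s(x)>0$, the sign of $F_s'(x)$ equals the sign of the right-hand side.

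The second step is to clear denominators. Both $1+x^s$ and $1+x$ are positive on $(0,1]$, so $F_s'(x)\ge 0$ is equivalent to $x^{s-1}(1+x)\ge 1+x^s$, i.e. to $x^{s-1}+x^{s}\ge 1+x^{s}$, i.e. simply to $x^{s-1}\ge 1$. For $x\in(0,1]$ and $s\in(0,1)$ the exponent $s-1$ is negative, whence $x^{s-1}=x^{-(1-s)}\ge 1$ (with equality only at $x=1$). This proves $F_s'(x)\ge 0$ on $(0,1]$, and together with continuity of $F_s$ at $x=0$ it shows that $F_s$ is non-decreasing on $[0,1]$, as claimed.

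I do not expect a real obstacle here; the argument is a short calculus exercise. The only points requiring care are the boundary behaviour at $x=0$ (where $F_s'$ may blow up, but $F_s$ is still continuous, so the monotonicity conclusion is unaffected) and tracking the direction of the inequality when multiplying through by the positive quantities $1+x^s$ and $1+x$. As a sanity check on the stated codomain, $F_s(0)=1$ and $F_s(1)=2^{1/s}/2=2^{(1-s)/s}$, so a non-decreasing $F_s$ indeed maps $[0,1]$ onto $[1,2^{(1-s)/s}]$.
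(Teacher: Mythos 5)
Your proof is correct and follows essentially the same route as the paper: a direct derivative computation whose sign reduces to the elementary inequality $x^{s-1}\ge 1$, which is exactly the paper's condition $x^s - x\ge 0$ after multiplying through by $x>0$. Your use of the logarithmic derivative and your explicit handling of the endpoints $s=1$ and $x=0$ are cosmetic differences only.
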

\begin{proof}%[Proof of Lemma \ref{lem::F_increasing}]
Taking the derivative of $F_s(x)$ with respect to $x$, we have
\begin{equation*}
 \displaystyle\frac{\partial}{\partial x} F_{s}(x) = \displaystyle\frac{(1 + x^s)^{\frac{1}{s}-1} (x^s - x)}{x (1 + x)^2} \geq 0
\end{equation*}
since $(x^s - x) \geq 0$ for $x, s\in[0, 1]$.
\end{proof}

\begin{lemma}\label{lem::f_decreasing}
For $s\in[0, 1]$, define the function $f_{s}:[0, \infty) \to [2^{\frac{s-1}{s}}, 1]$ as
\begin{equation}
 f_{s}(k) = \frac{\cosh(k s)^{\frac{1}{s}}}{\cosh(k)}.
\end{equation}
Then, $f_{s}$ is a non-increasing function. Moreover, this implies the function $g(\rho, z)$ defined in \eqref{eq:g} is non-increasing in the variable $z\in[0, 1]$ for any fixed $\rho\geq 0$. 
\end{lemma}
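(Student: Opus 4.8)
The plan is to prove the monotonicity of $f_s$ by a one‑line logarithmic‑derivative computation, and then to read off the statement about $g$ from the hyperbolic change of variables already present in the main proof.

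First I would write $\log f_s(k) = \tfrac1s\log\cosh(ks) - \log\cosh(k)$ and differentiate in $k$, which gives
\[
\frac{d}{dk}\log f_s(k) \;=\; \frac{1}{s}\cdot\frac{s\sinh(ks)}{\cosh(ks)} - \frac{\sinh(k)}{\cosh(k)} \;=\; \tanh(ks) - \tanh(k).
\]
Since $s\in[0,1]$ and $k\ge 0$ we have $0\le ks\le k$, and $\tanh$ is non-decreasing on $[0,\infty)$, so $\tanh(ks)\le\tanh(k)$ and the right-hand side is $\le 0$. Hence $\log f_s$, and therefore $f_s$, is non-increasing on $[0,\infty)$. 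The endpoint behaviour $f_s(0)=1$ and $f_s(k)\to 2^{(s-1)/s}$ as $k\to\infty$ (from $\cosh(ks)\sim\tfrac12 e^{ks}$, $\cosh(k)\sim\tfrac12 e^{k}$) then pins down the claimed codomain $[2^{(s-1)/s},1]$. The degenerate cases are harmless: $s=1$ gives $f_1\equiv 1$, and $s=0$, interpreted as the limit $f_0(k)=1/\cosh k$, is still covered by the same derivative ($\tanh 0 - \tanh k \le 0$).

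For the \emph{moreover} part I would invoke the identity \eqref{eq::g_1}, which with $s=\tfrac{1}{1+\rho}$ and $z=\tanh t$, $t\ge0$, says precisely $g(\rho,z)=\dfrac{\cosh(st)^{1/s}}{\cosh t}=f_s(\operatorname{arctanh} z)$; for completeness this follows by substituting $z=\tanh t$ into \eqref{eq:g} and using $1\pm z = e^{\pm t}/\cosh t$ to get $\tfrac12(1+z)^s+\tfrac12(1-z)^s = \cosh(st)/(\cosh t)^s$. Since $z\mapsto\operatorname{arctanh} z$ is increasing on $[0,1)$ and $f_s$ is non-increasing, the composition $z\mapsto g(\rho,z)$ is non-increasing on $[0,1)$, and continuity of $g(\rho,\cdot)$ at $z=1$ (where $g(\rho,1)=2^{(s-1)/s}$) extends this to the whole interval $[0,1]$. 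The companion fact used in the main proof, that $g(\rho,z_1z_2)$ is non-increasing in each of $z_1,z_2$ separately, then follows immediately, since for fixed $z_2\in[0,1]$ the map $z_1\mapsto z_1z_2$ is non-decreasing with range inside $[0,1]$.

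I do not anticipate a real obstacle: the logarithmic-derivative step is the entire content, and it reduces to monotonicity of $\tanh$ together with $ks\le k$. The only spots that deserve a sentence of care are the boundary values $s\in\{0,1\}$ and the endpoint $z=1$ in the monotonicity of $g$, both handled by the limiting/continuity remarks above.
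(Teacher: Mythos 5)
Your proof is correct and follows essentially the same route as the paper: the logarithmic derivative $\tanh(ks)-\tanh(k)\le 0$ for the monotonicity of $f_s$, and the substitution $g(\rho,z)=f_{1/(1+\rho)}(\operatorname{arctanh} z)$ with monotonicity of $\operatorname{arctanh}$ for the second claim. Your extra remarks on the codomain, the degenerate cases $s\in\{0,1\}$, and continuity at $z=1$ are sound additions but not points of divergence.
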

\begin{proof}%[Proof of Lemma \ref{lem::f_decreasing}]
We can equivalently show that $\log(f_{s}(k))$ is non-increasing in $k$. Taking the first derivative gives
\begin{equation*}
 \displaystyle\frac{\partial}{\partial k} \left(\frac{1}{s}\log(\cosh(k s)) - \log(\cosh(k)) \right) = \tanh(s k) - \tanh(k) \leq 0
\end{equation*}
as $\tanh(\cdot)$ is increasing in its argument. \\
To prove the second monotonicity relation, we let $k =  \operatorname{arctanh}{z}$, and $s = \frac{1}{1+\rho}$. Then,
\begin{equation*}
g(\rho, u) \triangleq f_{\frac{1}{1+\rho}}(\operatorname{arctanh}{z}). 
\end{equation*}
Since $k =  \operatorname{arctanh}{z}$ is a monotone increasing transformation, it follows that the function $g(\rho, z)$ is non-increasing in $z$ for fixed values of $\rho$.
\end{proof}

\begin{lemma}\label{lem::h_decreasing}
The function $h:[0, \infty)\times[0, 1]\times[0, 1]\to[2^{-\rho}, 1]$ defined as
\begin{equation*}
 h(\rho, z_1, z_2)=\frac{1}{2}(1 + z_1z_2)g\left(\rho, \frac{z_1 + z_2}{1 + z_1z_2}\right) + \frac{1}{2}(1 - z_1z_2)g\left(\rho, \frac{z_1 - z_2}{1 - z_1z_2}\right) 
\end{equation*}
where $g(\rho, z)$ is given by \eqref{eq:g}, is non-increasing in the variables $z_1$ and $z_2$ separately for any $\rho\geq 0$.
\end{lemma}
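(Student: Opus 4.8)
The plan is to reduce the claim, exactly as in the proof of \eqref{eq::inequality}, to a one-parameter monotonicity assertion via the hyperbolic change of variables, and then to differentiate. Observe first that $h(\rho, z_1, z_2) = h(\rho, z_2, z_1)$: interchanging $z_1$ and $z_2$ fixes $\tfrac12(1 + z_1 z_2)\, g\bigl(\rho, \tfrac{z_1 + z_2}{1 + z_1 z_2}\bigr)$ and sends $\tfrac{z_1 - z_2}{1 - z_1 z_2}$ to its negative, which leaves $g$ unchanged since $g(\rho, \cdot)$ is even; so it suffices to show $h$ is non-increasing in $z_1$. Set $s = \tfrac{1}{1+\rho} \in (0, 1]$ and $t_i = \operatorname{arctanh} z_i \in [0, \infty)$. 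Using the addition formula $\operatorname{arctanh}\tfrac{z_1 \pm z_2}{1 \pm z_1 z_2} = t_1 \pm t_2$, the evenness of $g(\rho, \cdot)$, and the identities \eqref{eq::g_1}, \eqref{eq::h}, one gets
\[
h(\rho, z_1, z_2) = \lambda\, f_s(t_1 + t_2) + (1 - \lambda)\, f_s(t_1 - t_2), \qquad \lambda = \tfrac12\bigl(1 + \tanh t_1 \tanh t_2\bigr),
\]
where $f_s$ is the (even) function of Lemma~\ref{lem::f_decreasing} and $\lambda \in [\tfrac12, 1)$. Since $z_1 \mapsto t_1$ is increasing, monotonicity of $h$ in $z_1$ is equivalent to $\partial h/\partial t_1 \le 0$.

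Write $a = t_1 + t_2 \ge 0$, $b = t_1 - t_2$ (so $|b| \le a$), and recall that $f_s'(x) = f_s(x)\bigl(\tanh(sx) - \tanh x\bigr)$ is $\le 0$ for $x \ge 0$ and $\ge 0$ for $x \le 0$. Then
\[
\frac{\partial h}{\partial t_1} = \lambda'\bigl(f_s(a) - f_s(b)\bigr) + \lambda\, f_s'(a) + (1 - \lambda)\, f_s'(b), \qquad \lambda' = \tfrac12\tanh t_2\,\bigl(1 - \tanh^2 t_1\bigr) \ge 0.
\]
If $t_1 \ge t_2$ then $b \ge 0$, and all three summands are $\le 0$: the first because $a \ge |b| = b$ and $f_s$ is non-increasing on $[0,\infty)$ while $\lambda' \ge 0$, the other two because $f_s'(a), f_s'(b) \le 0$. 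Hence $\partial h/\partial t_1 \le 0$ whenever $t_1 \ge t_2$.

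The case $t_1 < t_2$ is the heart of the matter. Now $b < 0$, so $f_s'(b) \ge 0$ and the last summand is the only one with the ``wrong'' sign; one must show it is dominated by the absolute values of the other two. Writing $h = N/D$ with $N = \cosh(sa)^{1/s} + \cosh(sc)^{1/s}$, $D = \cosh a + \cosh c = 2\cosh t_1\cosh t_2$, and $c = |b| = t_2 - t_1$, differentiating, and simplifying (using $\sinh a - \sinh c = 2\cosh t_2\sinh t_1$), one finds that $\partial h/\partial t_1 \le 0$ is equivalent to
\[
\cosh(sa)^{1/s}\bigl(\tanh(sa) - \tanh t_1\bigr) \ \le\ \cosh(sc)^{1/s}\bigl(\tanh(sc) + \tanh t_1\bigr),
\]
which, with $t_1 = \tfrac{a - c}{2}$, is a concrete inequality in the parameters $a \ge c \ge 0$ and $s \in (0, 1]$. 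I expect this to be the main obstacle. The plan is to divide by $\cosh(sc)^{1/s}$, so that the left-hand side carries only the ratio $\bigl(\cosh(sa)/\cosh(sc)\bigr)^{1/s} \ge 1$, and then to prove the resulting one-variable inequality (in $a$, with $c$ fixed) by a derivative estimate; it should help that the two sides agree on the boundary pieces $a = c$, i.e.\ $t_1 = 0$, and $s = 1$, where $h \equiv 1$. If this turns out to be awkward, a fallback is to prove that $z_1 \mapsto h(\rho, z_1, z_2)$ is concave on $[0, 1]$ --- which holds in the explicitly computable cases $\rho = 1$ and $\rho \to \infty$, where $h = \tfrac12\bigl(1 + \sqrt{(1-z_1^2)(1-z_2^2)}\bigr)$ and $h = \sqrt{(1-z_1^2)(1-z_2^2)}$ respectively --- and to combine this with the evenness of $h$ in $z_1$ (whence $\partial h/\partial z_1\big|_{z_1=0} = 0$) to conclude $\partial h/\partial z_1 \le 0$ throughout $[0,1]$.
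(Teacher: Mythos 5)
Your setup is correct as far as it goes: the symmetry reduction, the change of variables giving $h=\lambda f_s(t_1+t_2)+(1-\lambda)f_s(t_1-t_2)$, the three-term formula for $\partial h/\partial t_1$, the complete treatment of the case $t_1\ge t_2$, and the algebra reducing the remaining case to
$\cosh(sa)^{1/s}\bigl(\tanh(sa)-\tanh t_1\bigr)\le\cosh(sc)^{1/s}\bigl(\tanh(sc)+\tanh t_1\bigr)$
all check out. But that last inequality is exactly where you stop: you ``expect'' it to be the main obstacle and sketch a plan plus a fallback (concavity of $z_1\mapsto h$, verified only at $\rho=1$ and $\rho\to\infty$) without carrying either out. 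So the lemma is not proved for $z_1<z_2$, and this is a genuine gap rather than a skippable verification. The difficulty is real in the sense that no term-by-term sign argument can close it: taking $s\in(\tfrac12,1)$ fixed, $t_2=t_1+\epsilon$ with $\epsilon$ small and $t_1$ large, one finds $\lambda f_s'(a)=O(e^{-4st_1})$ while $(1-\lambda)f_s'(b)\approx 2(1-s)\epsilon\,e^{-2t_1}>0$, so even the \emph{sum} of the two derivative terms is eventually positive, and the negative summand $\lambda'\bigl(f_s(a)-f_s(b)\bigr)$ is genuinely needed to compensate. Your displayed inequality is the correct combination of all three terms, but it must actually be proved for all $a\ge c\ge 0$ and $s\in(0,1]$ before the lemma is established.

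For comparison, the paper differentiates in the $z$-variables, writes $\partial h/\partial z_1$ as $\tfrac12 z_2\bigl[g(u_+)-g(u_-)\bigr]$ plus two terms proportional to $g'(u_+)$ and $g'(u_-)$ with $u_\pm=\tfrac{z_1\pm z_2}{1\pm z_1z_2}$, shows the bracketed difference is nonpositive from $u_+\ge|u_-|$ together with the monotonicity and evenness of $g$ (this is your first summand, valid for all $t_1,t_2$), and then asserts that both $g'$ terms are nonpositive by Lemma~\ref{lem::f_decreasing}. That assertion only invokes monotonicity of $g$ on $[0,1]$, and $u_-<0$ precisely when $z_1<z_2$; so the paper's argument, as written, covers only the case you dispose of completely, and you have correctly located the step that needs more care. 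Locating it is not resolving it, though: to finish you must either prove your displayed inequality (or the concavity fallback) for general $s$, or find a grouping of the three terms whose nonpositivity can be established on all of $[0,1]^2$.
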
 
\begin{proof}%[Proof of Lemma \ref{lem::h_decreasing}]
By the symmetry of $h$ with respect to $z_1$ and $z_2$, it suffices to show the claim for $z_1$ alone.  In the
expression below, we will suppress $\rho$ in all function arguments, and denote $g'(u)=\frac{\partial}{\partial u}g(\rho,u)$.
Taking the derivative of $h$ with respect to $z_1$, we get
 \begin{align*}
 \frac{\partial}{\partial z_1} h(z_1, z_2) &= \frac{1}{2}z_2\hspace{2mm} g\left( \frac{z_1 + z_2}{1 + z_1 z_2}\right) + \frac{1 - {z_2}^{2}}{2(1 + z_1 z_2)}\hspace{2mm} g'\left( \frac{z_1 + z_2}{1 + z_1 z_2}\right) \\
&- \frac{1}{2}z_2\hspace{2mm} g\left( \frac{z_1 - z_2}{1 - z_1 z_2}\right) + \frac{1 - {z_2}^{2}}{2(1 - z_1 z_2)}\hspace{2mm} g'\left( \frac{z_1 - z_2}{1 - z_1 z_2}\right)\\
 &= \frac{1}{2}z_2\Bigl[ g\left( \frac{z_1 + z_2}{1 + z_1 z_2}\right) - g\left( \frac{z_1 - z_2}{1 - z_1 z_2}\right)\Bigr]\\
&\quad+ \frac{1 - {z_2}^{2}}{2(1 + z_1 z_2)}\hspace{2mm} g'\left( \frac{z_1 + z_2}{1 + z_1 z_2}\right) \\
&\quad+ \frac{1 - {z_2}^{2}}{2(1 - z_1 z_2)}\hspace{2mm} g'\left( \frac{z_1 - z_2}{1 - z_1 z_2}\right).
\end{align*}
The last two terms that contain $g'(\cdot)$ are negative by Lemma \ref{lem::f_decreasing}, so it suffices to show that
$$
g\left(\frac{z_1 + z_2}{1 + z_1 z_2}\right) \leq g\left(\frac{z_1 - z_2}{1 - z_1 z_2}\right). 
$$
To that end, observe that, for any $z_1,z_2\in [0,1]$ we have
\begin{equation*}
 \frac{z_1 + z_2}{1 + z_1 z_2} \geq \frac{|z_1 - z_2|}{1 - z_1 z_2}
\end{equation*}
and by Lemma~\ref{lem::f_decreasing} and the symmetry of $g$ around $z=0$, the required inequality follows.
\end{proof}

\end{document}